\newcommand{\markit}{}
\providecommand{\wt}[1]{\widetilde{#1}}
\providecommand{\cone}{\mathrm{cone}}
\newcommand {\Z}	  {\mathbb{Z}}
\newcommand {\Q}	  {\mathbb{Q}}
\newcommand {\R}	  {\mathbb{R}}
\newcommand {\Area}{{\mathrm{Area}}}
\newcommand{\vol}{\mathrm{vol}}
\def\R{\mathbb R}
\def\poly{\mbox poly}
\renewcommand{\epsilon}{\varepsilon}
\renewcommand{\leq}{\leqslant}
\renewcommand{\geq}{\geqslant}
\theoremstyle{plain}
\newtheorem{theorem}{Theorem}
\newtheorem{lemma}[theorem]{Lemma}
\theoremstyle{definition}
\newtheorem{problem}{Problem}
\newtheorem{algorithm}{Algorithm}
\theoremstyle{remark}
\newtheorem{remark}{Remark}
\begin{document}

\title{Geometric Random Edge}

\author{Friedrich Eisenbrand\thanks{Email: friedrich.eisenbrand@epfl.ch}\\
EPFL
\and
Santosh Vempala\thanks{Email: vempala@gatech.edu}\\
Georgia Tech\\
}
\maketitle

\begin{abstract}
  \noindent 
  We show that a variant of the random-edge pivoting rule results in a
  strongly polynomial time simplex algorithm for linear programs
  $\max\{c^Tx \colon x \in \R^n, \, Ax\leq b\}$,
  whose constraint matrix $A$
  satisfies a geometric property introduced by Brunsch and R\"oglin:
  The sine of the angle of a row of $A$
  to a hyperplane spanned by $n-1$
  other rows of $A$
   is at least $\delta$.

  This property is a geometric generalization of $A$
  being integral and each 
  sub-determinant of $A$
  being bounded by $\Delta$
  in absolute value. In this case  $\delta \geq 1/(\Delta^2 n)$.
  In particular, linear programs defined by totally unimodular
  matrices are captured in this framework. Here $\delta \geq 1/ n$
  and Dyer and Frieze previously described a strongly polynomial-time
  randomized simplex algorithm for linear programs with $A$
  totally unimodular.
   
  The expected  number of pivots of the simplex algorithm is polynomial in the
  dimension and $1/\delta$ and independent of the number of
  constraints of the linear program. Our main result can be viewed as
  an algorithmic realization of the proof of small diameter for such
  polytopes by Bonifas et al., using the ideas of Dyer and Frieze.

\end{abstract}

\thispagestyle{empty}
\newpage

\setcounter{page}{1}

\section{Introduction} 
\label{sec:introduction}

Our goal is to solve a linear program 
\begin{equation}
  \label{eq:1}
  \begin{array}{c}
      \max c^Tx \\
      Ax \leq b
  \end{array}
\end{equation}
where $A \in \R^{m\times n}$ is of full column rank, $b \in \R^m$ and $c \in \R^n$. 
The rows of $A$ are denoted by $a_i, \, 1 \leq i \leq m$. We assume
without loss of generality  that the rows have Euclidean length one, i.e., that 
$\|a_i\|=1$ holds for $i=1,...,m$.  The rows shall have the following
\emph{$\delta$-distance property} (we use $\langle . \rangle$ to denote linear span):
\begin{quote} 
  For any $I \subseteq [m]$, and $j \in [m]$, if $a_j \notin \left<
    a_i \colon i \in I \right>$ then $d(a_j, \left< a_i \colon i \in I
  \right>) \geq \delta$. In other words, 
  if $a_j$ is not in the span of the $a_i,  i \in I $, then the distance of $a_j$
  to the subspace that is generated by the $a_i, \, i \in I$ is at
  least $\delta$.
\end{quote}
In this paper, we analyze the simplex algorithm~\cite{D143} to solve~\eqref{eq:1} with a
variant of the \emph{random edge} pivoting rule.  Our main result is a
strongly polynomial running time bound for linear programs satisfying
the $\delta$-distance property. 

{\markit
\begin{theorem}\label{thm:main} 
  There is a random edge pivot rule that solves a linear program using
  $\poly(n, 1/\delta)$ pivots in expectation.  The expected running
  time of this variant of the simplex algorithm is polynomial in $n,m$
  and $ 1/\delta$. 
\end{theorem}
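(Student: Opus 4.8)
The plan is to solve~\eqref{eq:1} by a shadow-vertex variant of the simplex method in which the auxiliary ``rotating'' objective is drawn at random, so that each pivot moves along an edge singled out by a random perturbation of the objective --- this randomized edge choice is the ``random edge'' ingredient. After the standard reductions --- one may assume that $P=\{x : Ax\le b\}$ is pointed, that the optimum is attained, and, by perturbing $b$ and $c$ (which does not affect the $\delta$-distance property, since that property depends only on the rows of $A$), that $P$ is simple with a unique optimal vertex $v^{\ast}$ --- the algorithm is: draw $d\in\R^n$ from a rotationally symmetric distribution (e.g.\ uniform on the sphere); reach the vertex $v_d$ maximizing $d^Tx$ over $P$ (phase~$1$, below); then increase $\theta$ from $0$ to $1$, maintaining a vertex $v_\theta$ maximizing $((1-\theta)d+\theta c)^Tx$. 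Each change of $v_\theta$ is a pivot, at $\theta=1$ we are at $v^{\ast}$, and the number of pivots equals the number of vertices of the planar shadow $\pi(P)$, where $\pi$ is the orthogonal projection onto $W=\Span(c,d)$. So the crux is to bound the expected number of vertices of $\pi(P)$ by $\poly(n,1/\delta)$, \emph{independently of $m$}.

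The geometric input I would exploit is this reading of the $\delta$-distance property: an edge of $P$ has direction $g$ spanning $\langle a_i : i\in I\rangle^{\perp}$ for some $I\subseteq[m]$ with $\op{rank}\{a_i : i\in I\}=n-1$, so for every row $a_j\notin\langle a_i : i\in I\rangle$ one has $|a_j^Tg| = d(a_j,\langle a_i : i\in I\rangle)\cdot\|g\| \ge \delta\|g\|$. Normalizing $\|g\|=1$, this says that as $x$ moves along any edge at unit speed the slack $b_j-a_j^Tx$ of every non-parallel constraint changes at rate at least $\delta$. I would use this twice: to turn ``how far the point (or the objective) moved'' into a bound on ``how many constraints can have become tight'', and --- more delicately --- to prove a packing bound: within a bounded region of $P$, only $\poly(n,1/\delta)$ edges can have directions that project into any fixed narrow angular sector of $W$, because too many nearly-parallel edge directions meeting at a common vertex would force two of the rows in a shared basis to be nearly parallel, contradicting $\delta$-distance after an elementary estimate.

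Granting this, I would mirror and make algorithmic the diameter argument of Bonifas et al.: traversing $\partial\pi(P)$ once, the outer normal turns monotonically through $2\pi$; group the vertices of $\pi(P)$ into $O(\log(1/\delta))$ ``scales'' according to the length of the projection of the corresponding edge of $P$ --- after a normalization that, again by the rate-$\delta$ estimate, keeps the longest and shortest relevant edge projections within a factor $\poly(n,1/\delta)$ of one another --- and, within each scale, bound the number of vertices by $\poly(n,1/\delta)$ using the packing bound together with the randomness of $d$; summing over scales yields the claimed expected pivot count. Phase~$1$ --- reaching $v_d$, or even any initial vertex --- is handled by the same shadow-vertex step applied to an auxiliary linear program whose constraint matrix still satisfies a $\delta'$-distance property with $1/\delta'\le\poly(n,1/\delta)$. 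Since each pivot costs $\poly(m,n)$ arithmetic operations (one $n\times n$ solve plus a ratio test over the $m$ rows), the total expected running time is $\poly(n,m,1/\delta)$.

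The step I expect to be the main obstacle is exactly this $m$-independent bound on the expected shadow size, for two reasons. First, the counting must be genuinely scale-free: it may not invoke the bit-length of $A,b,c$ nor the number of constraints, so the number of relevant ``scales'' itself has to be controlled by $n$ and $\delta$ alone --- this is the analogue of (and the place where one pays a price comparable to) the $\log(n\Delta)$ factor in Bonifas et al. Second, one must show that the randomization truly helps: a fixed $2$-plane can pick up a number of edge directions that is polynomial in $m$, and only averaging over the random plane $W$ (equivalently, the random $d$) brings this down to $\poly(n,1/\delta)$; proving that, uniformly over all $\binom{m}{n-1}$ candidate directions and genuinely using the $\delta$-distance property rather than mere general position, is the technical heart of the argument.
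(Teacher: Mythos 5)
Your proposal is a plan, not a proof, and you identify the gap yourself: the $m$-independent bound on the expected number of shadow vertices is never established. That gap is real and is not closed by the sketch you give. Your packing claim (``too many nearly-parallel edge directions meeting at a common vertex would force two basis rows to be nearly parallel'') is stated without proof, and in any case only controls edges incident to a single vertex, whereas the edges of $\pi(P)$ come from all over $P$; you would need a global control on nearly-parallel edge directions, and it is not clear that the $\delta$-distance property alone delivers it. Your ``scales'' step is also unsupported: you invoke a normalization keeping the relevant edge projections within a $\poly(n,1/\delta)$ factor of one another, but no such normalization is described, and since $b$ is arbitrary the range of edge lengths of $P$ has no a priori bound in terms of $n$ and $\delta$. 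These are precisely the points the paper itself flags as the reason a shadow-vertex realization of the Bonifas et al.\ diameter bound was open; a version of this route was carried out only later by Dadush and H\"ahnle, using ideas you would still need to supply.

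The paper avoids the shadow-size question entirely. Instead of tracing a deterministic path under a rotating objective, it runs a lazy Metropolis random walk on a countably infinite family of parallelepipeds tiling the normal fan of $P$, with stationary distribution proportional to $\exp(-\|x-\alpha c\|_1)$. A walk step that crosses a cone boundary is a simplex pivot, so pivoting is controlled by the mixing time. Rapid mixing follows from the Lov\'asz--Simonovits conductance framework for general (here, countably infinite) Markov chains together with the Bobkov--Houdr\'e isoperimetric inequality for $\ell_1$-type product measures, and the $\delta$-distance property enters through a volume comparison between neighboring parallelepipeds ($\vol(P')/\vol(P)\ge\delta$), not through any edge-direction packing. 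The walk converges in $O(n^{5.5}/\delta^3)$ steps, after which a rounding step (the lemma of Cook et al.\ type, Lemma~\ref{lem:4}) extracts one element of the optimal basis; iterating over dimensions gives the theorem. This is a fundamentally different argument with no counterpart to the two steps you left open.
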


The $\delta$-distance
property is a geometric generalization of the algebraic property of an integer matrix having small 
sub-determinants in absolute value. Recall that a $k\times k$
\emph{sub-determinant} of $A$
is the determinant of a sub-matrix that is induced by a choice of $k$
rows and $k$ columns of $A$. 
To see this, let  $A \in \Z^{m\times n}$
with each of its 
sub-determinants bounded by $\Delta$
in absolute value. Let $B \subseteq \{1,\dots,n\}$
be a \emph{basis of $A$},
i.e., an index set satisfying $|B| = n$
and $\langle a_i \colon i \in B \rangle = \R^n$.
Let $A_B$
be the sub-matrix of $A$
that is induced by the rows indexed by $B$
and let $w \in \Q^n$
be the column of $A_B^{-1}$
with $a_i^T w =1$. 
The distance of $a_i$
to $\langle a_j \colon j \in B - i \rangle$
is the absolute value of
\begin{equation}
  \label{eq:3}
  \frac{a_i^T w }{ \| w \|} 
\end{equation}
By Cramer's rule, each entry of $w$ is a $(n-1) \times (n-1)$ sub-determinant of $A_B$  divided by $\det(A_B)$. The division by $\det(A_B)$ cancels out in \eqref{eq:3}.  After this cancellation, 
the numerator is an integer and the denominator is at most $\sqrt{n} \cdot \Delta$.  This shows that the absolute value of~\eqref{eq:3} is at least $1/(\sqrt{n}  \cdot \Delta)$.  To bound $\delta$ from below, this distance  needs to be divided by $\|a_i \| \leq \sqrt{n} \cdot \Delta$ since the rows of the  matrix should be scaled to length one when we measure the distance.  Thus an integer matrix satisfies the $\delta$-distance property with $\delta \geq 1/(n \cdot \Delta^2)$.

\medskip 
This shows that our result is   an extension of a randomized simplex-type
algorithm of Dyer and Frieze~\cite{MR1274170} that solves linear
programs~(\ref{eq:1}) for \emph{totally unimodular} ($\Delta = 1$)  $A \in \{0,\pm
1\}^{m\times n}$ and arbitrary $b$ and $c$.  In this case,  $\delta \geq 1/n^2$.


\subsection*{Related work} 
\label{sec:related-work}

General linear programming problems can be solved in weakly polynomial
time~\cite{Khachiyan79,MR779900}. This means that the number of basic
arithmetic operations performed by the algorithm is bounded by the
binary encoding length of the input.  It is a longstanding open
problem whether there exists a \emph{strongly polynomial time}
algorithm for linear programming. Such an algorithm would run in
time polynomial in the dimension and the number of inequalities on a
RAM machine. For general linear
programming, the simplex method is
sub-exponential~\cite{MSW96,DBLP:conf/stoc/Kalai92}. 

The $\delta$-distance
property is a geometric property and not an algebraic one. In fact,
the matrix $A$
can be irrational even. This is contrast to a result of
Tardos~\cite{MR861043}, who gave a strongly polynomial time algorithm
for linear programs whose constraint matrix has integer entries that
are bounded by a constant.  The algorithm of Tardos is not a
simplex-type algorithm.

Spielman and Teng~\cite{MR2145860} have shown that the simplex
algorithm with the \emph{shadow-edge pivoting rule} runs in expected
polynomial time if the input is randomly perturbed. This
\emph{smoothed analysis} paradigm was subsequently applied to many
other algorithmic problems. Brunsch and
R\"oglin~\cite{brunsch2013finding} have shown that, given two vertices
of a linear program satisfying the $\delta$-distance
property, one can compute in expected polynomial time a path joining
these two vertices with $O(m n^2 / \delta^2)$
edges in expectation with the shadow-edge rule. However, the two
vertices need to be known in advance. The authors state the problem of
finding an optimal vertex w.r.t. a given objective function vector $c$
in polynomial time as an open problem. We solve this problem and
obtain a path whose length is independent on the number $m$
of inequalities.

Klee and Minty~\cite{MR0332165} have shown that the simplex method is
exponential if Dantzig's original pivoting rule is applied. More
recently, Friedmann, Hansen and Zwick~\cite{MR2931978} have shown that
the \emph{random edge} results in a superpolynomial number of pivoting
operations. Here, random edge means to choose \emph{an improving} edge
uniformly at random. The authors also show such a lower bound for
\emph{random facet}. Friedmann~\cite{friedmann2011subexponential} also
recently showed a superpolynomial lower bound for \emph{Zadeh's}
pivoting rule.  Nontrivial, but exponential upper bounds for random
edge are given in~\cite{gartner2007two}.

Bonifas et
al.~\cite{BDEHN12} have shown that the diameter of a polytope defined by
an integral  constraint matrix $A$ whose sub-determinants are bounded
by $\Delta$ is polynomial in $n$ and $\Delta$ and independent of the
number of facets. In the setting of the $\delta$-distance property,
their proof yields a polynomial bound in $1/\delta$ and the dimension
$n$ on the diameter that is independent of $m$. Our result is an extension of this result in the setting of linear programming. We show that there is a variant of the simplex algorithm that uses a number of pivots that is polynomial in  $1/\delta$ and the dimension
$n$.

\subsection*{Assumptions} 

Throughout we  assume that $c$ and the rows of $A$, denoted by  $a_i, \, 1\leq i \leq m$, have Euclidean norm $\|\cdot\|_2$ one. 
We also assume that
the linear program is \emph{non-degenerate}, meaning that for each feasible
point $x^*$, there are at most $n$ constraints that are satisfied by
$x^*$ with equality. It is well known that this assumption can be made without loss of generality~\cite{Schrijver86}.

\section{Identifying an element of the optimal basis}
\label{sec:getting-close-c}


Before we describe our variant of the random-edge simplex algorithm,
we explain the primary goal, which is to identify one inequality of
the optimal basis.  Then, we can continue with the search for other
basis elements by running the simplex algorithm in one dimension
lower.

Let $K = \{x \in \R^n \colon Ax \leq b \}$ be the \emph{polyhedron} of
\emph{feasible solutions} of~(\ref{eq:1}). Without loss of generality,
see Section~\ref{sec:finding-an-initial}, we can assume that $K$ is a
\emph{polytope}, i.e., that $K$ is bounded.  Let $v \in K$ be a
\emph{vertex}. 
The \emph{normal cone} $C_v$ of $v$ is the set of vectors $w \in \R^n$ such
that, if $c^Tx$ is replaced by $w^Tx$ in (\ref{eq:1}), then $v$ is an
optimal solution of that linear program. Equivalently, let $B
\subseteq \{1,\dots,m\}$ be the $n$ row-indices with $a_i^T v = b_i,
\, i \in B$, then the normal cone of $v$ is the set $\cone\{a_i \colon
i \in B\}=\{ \sum_{i \in B} \lambda_i a_i \colon\lambda_i \geq 0, \, i
\in B\}$.  
The cones $C_u$ and $C_v$ of two vertices $u\neq
v$ intersect if and only if $u$ and $v$ are neighboring vertices of
$K$. In this case, they intersect in the common facet 
$\cone\{ a_i \colon 
i \in B_u
\cap B_v\}$, where $B_u$ and $B_v$ are the indices of tight
inequalities of $u$ and $v$ respectively, see 
Figure~\ref{fig:5}.

\begin{figure}[h]
  \centering
  \iftoggle{DEBUG}{}
  {
    \begin{tikzpicture}[scale=0.5]
    \clip (-7 ,-5) rectangle (18,5);

    \coordinate (A) at (-4,-1);
    \coordinate  (B) at (-2,2);
    \coordinate  (C) at (3,2);
    \coordinate  (D) at (4,-1);
    \coordinate  (E) at (1,-3);

    \filldraw[line width = 1pt, fill=gray!40!white](A) -- (B) -- (C) -- (D) -- (E) -- (A);

    \draw[line width = 1pt, shift=(A)] (0,0) -- ( -1.11 , -2.78 ); 
    \draw[line width = 1pt, shift=(A)] (0,0) --     ( -2.49 , 1.66 );

    \draw[line width = 1pt, shift=(B)] (0,0) --      ( -2.49 , 1.66 );
    \draw[line width = 1pt, shift=(B)] (0,0) -- (0,3);
    
    \draw[line width = 1pt, shift=(C)] (0,0) -- (0,3);
    \draw[line width = 1pt, shift=(C)] (0,0) -- ( 2.84 , 0.94 );

    \draw[line width = 1pt, shift=(D)] (0,0) -- ( 2.84 , 0.94 );
    \draw[line width = 1pt, shift=(D)] (0,0) -- ( 1.66 , -2.49 );

    \draw[line width = 1pt, shift=(E)] (0,0) -- ( 1.66 , -2.49 );
    \draw[line width = 1pt, shift=(E)] (0,0) -- ( -1.11 , -2.78 );

    {
      \filldraw (B) [ blue!80,fill=blue!80]  circle (3pt);
      
      \node [anchor = north west] at (B) {$u$};
    }

      \fill[red,opacity=.5, ] (-2,2) -- ( -4.49 , 3.66 ) -- (-2,5) -- (-2,2);
      
v

      \filldraw (C) [ blue!80,fill=blue!80]  circle (3pt);
      
      \node [anchor = north east] at (C) {$v$};
      
      \fill[red,opacity=.5, ] (3,2) -- ( 5.84 , 2.94 ) -- (3,5) -- (3,2);

      \begin{scope}

      \coordinate []  (F) at (14,0);
      
      \clip (F) circle [radius = 4.2cm]; 


    \draw[line width = 1pt, shift=(F)] (0,0) --      ( - 4.98 , 3.32 );

    \draw[line width = 1pt, shift=(F)] (0,0) -- (0,6);

    \draw[line width = 1pt, shift=(F)] (0,0) -- ( 5.68 , 1.88 );

    \draw[line width = 1pt, shift=(F)] (0,0) -- ( 2.32 , -4.98 );
    \draw[line width = 1pt, shift=(F)] (0,0) -- ( -2.22 , -5.46 );

    \fill[shift=(F),red,opacity=.5] (0,0) --     ( -4.98 , 3.32 )-- (0,6);    
    \node[align=left] at (13,1.8) {$C_u$};

    \fill[shift=(F),red,opacity=.5] (0,0) -- (0,6) -- ( 5.68 , 1.88 );    
    \node[align=left] at (15,1.8) {$C_v$};

    \filldraw (F) [ black]  circle (3pt);
    \node[align=right] at (14.5,-0.2) {$0$};

    
    

  \end{scope}     
\end{tikzpicture}
}

  \caption{Two neighboring vertices $u$ and $v$ and their normal-cones $C_u$ and $C_v$. }
  \label{fig:5}
\end{figure}
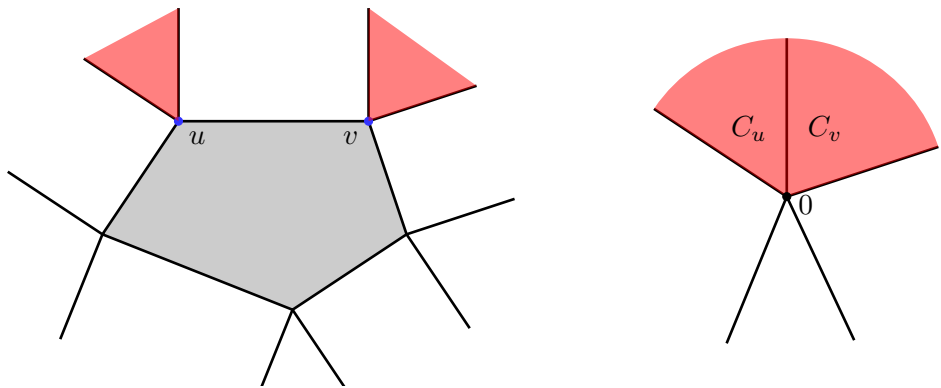


%
%

\medskip 
\noindent 
Suppose  that we found a point $c' \in \R^n$  together with a vertex $v$ of $K$ whose normal cone $C_v$  contains $c'$   such that: 
\begin{equation}
  \label{eq:4}
  \|c - c'\| < \delta/(2 \cdot n).  
\end{equation}
%
The following variant of a lemma proved by Cook et
al.~\cite{CookGerardsSchrijverTardos86} shows that we then can 
identify at least one index of the optimal basis of~(\ref{eq:1}). We provide a proof of the lemma in the appendix. 
\begin{lemma}
 \label{lem:4}
 Let $c' \in \R^n$ and 
 let $B \subseteq \{1,\dots,m\}$ be the optimal basis of the linear program~(\ref{eq:1}) and let $B'$ be an optimal basis of the linear program~(\ref{eq:1}) with $c$ being replaced by $c'$. Consider the conic combination 
 \begin{equation}
   \label{eq:2}
   c' = \sum_{j \in B'} \mu_j a_j. 
 \end{equation}
 For  $k \in B' \setminus B$, one has 
 \begin{displaymath}
   \| c - c' \| \geq \delta \cdot \mu_k.
 \end{displaymath}
\end{lemma}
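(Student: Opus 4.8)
The plan is to use the optimality of $B$ for the objective $c$ together with the conic decomposition $c' = \sum_{j \in B'} \mu_j a_j$ to extract a vector witnessing the distance $\|c-c'\|$. Fix $k \in B' \setminus B$. Since $a_k$ is a row of $A$ and $k$ is not in the optimal basis $B$, I want to argue that $a_k \notin \langle a_i \colon i \in B \setminus \{k\}\rangle$; in fact, because $B$ is a basis, $\langle a_i \colon i \in B \rangle = \R^n$ and, by the non-degeneracy assumption, no row of $A$ lies in the span of fewer than $n$ of the basis rows unless it is a multiple of one of them. The clean route is: let $H = \langle a_i \colon i \in B, \ a_i \neq \pm a_k \rangle$ — more precisely, take a hyperplane $H$ spanned by $n-1$ of the rows indexed by $B$ and apply the $\delta$-distance property to $a_k$ relative to that index set, giving $d(a_k, H) \geq \delta$. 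Let $u$ be the unit normal to $H$ chosen so that $a_k^T u = d(a_k,H) \geq \delta$ (and $a_i^T u = 0$ for the $n-1$ rows spanning $H$).

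The key step is then to test the two objective vectors against $u$. Because $v$ (the vertex whose normal cone contains $c'$, equivalently the vertex optimal for $c'$) need not be optimal for $c$, I instead use the vertex $v^*$ optimal for $c$, whose tight-constraint index set is $B$. Write $c = \sum_{i \in B} \lambda_i a_i$ with all $\lambda_i \geq 0$. Then $c^T u = \sum_{i \in B} \lambda_i (a_i^T u)$. Choosing $H$ to be spanned by those $n-1$ rows of $B$ other than the one, say $i_0$, that is "closest" to $a_k$'s direction — the cleanest choice is to let $i_0$ be any element of $B$ not orthogonal to $u$, and note that by non-degeneracy we can arrange $\lambda_{i_0}$ to be controlled, or better: just observe $c^T u = \lambda_{i_0} a_{i_0}^T u$ has a definite sign, while $c'^T u = \sum_{j \in B'} \mu_j (a_j^T u) \geq \mu_k (a_k^T u) \geq \mu_k \delta$ provided $a_j^T u \geq 0$ for all $j \in B'$ — which is where the geometry has to be set up carefully so that $u$ lies in the appropriate halfspace.

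Concretely, here is the cleaner version I expect to carry out: let $w$ be the column of $A_B^{-1}$ (indices over $B$) dual to the row $i_0 \in B$, i.e. $a_{i_0}^T w = 1$ and $a_i^T w = 0$ for $i \in B \setminus \{i_0\}$; normalize to $u = w/\|w\|$, so $u$ is orthogonal to the $n-1$ rows $\{a_i : i \in B \setminus \{i_0\}\}$ and the $\delta$-distance property gives $a_{i_0}^T u = 1/\|w\| = d(a_{i_0}, \langle a_i : i \in B\setminus\{i_0\}\rangle) \geq \delta$. Since $i_0 \notin B'$ can fail, I instead pick $i_0 \in B \setminus B'$ (such an index exists because $B \neq B'$, as $k \in B' \setminus B$ means $B' \not\subseteq B$, hence by $|B| = |B'| = n$ also $B \not\subseteq B'$). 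Now compute: $c^T u = \lambda_{i_0}(a_{i_0}^T u) + \sum_{i \in B \setminus \{i_0\}} \lambda_i \cdot 0 = \lambda_{i_0} a_{i_0}^T u \geq 0$. On the other hand $c'^T u = \sum_{j \in B'} \mu_j (a_j^T u)$, and since $i_0 \notin B'$, flip the sign of $u$ if needed so that $a_{i_0}^T u \leq 0$ instead — wait, this conflicts with the $\delta$ bound.

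Let me state the obstacle plainly and the fix. The real difficulty is sign bookkeeping: I need a single unit vector $u$ such that $u^T c \leq 0$ and $u^T c'$ is bounded below by $\mu_k \delta$. The resolution: choose $u$ orthogonal to $n-1$ rows of $B'$, namely all of $B'$ except $k$, oriented so that $a_k^T u > 0$; then $c'^T u = \mu_k (a_k^T u) \geq \mu_k \delta$ by the $\delta$-distance property applied to $a_k$ and the index set $B' \setminus \{k\}$ (valid since $a_k \notin \langle a_j : j \in B' \setminus \{k\}\rangle$ because $B'$ is a basis). For the other side, since $k \notin B$, the vertex $v^*$ optimal for $c$ satisfies $a_k^T v^* < b_k$ strictly, and $c = \sum_{i \in B} \lambda_i a_i$ with $\lambda_i \geq 0$; I then need $c^T u \leq 0$. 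This follows if $u$ points "away" from the feasible region at $v^*$ in the direction that keeps $a_i^T(\cdot)$ nonincreasing along $u$ for $i \in B$ — i.e. if $a_i^T u \leq 0$ for all $i \in B$. That need not hold automatically, so the genuine argument (and the step I expect to be the crux) is to compare the two optima directly: $c^T v^* \geq c^T v$ and $c'^T v \geq c'^T v^*$, add these, getting $(c - c')^T(v^* - v) \geq 0$, hence $\|c - c'\| \cdot \|v^* - v\| \geq (c-c')^T(v^*-v)$; then bound the right side below using that $v^* - v$ moves across the facet structure and $a_k$ is tight at $v$ but slack at $v^*$, so $a_k^T(v - v^*) = b_k - a_k^T v^* > 0$, and relate this gap to $\mu_k$ via the conic combination and duality. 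I expect the bulk of the work — and the main obstacle — to be making this last chain of inequalities quantitative with exactly the factor $\delta$, which is where the $\delta$-distance property must be invoked to lower-bound the relevant determinantal quantity $1/\|w\|$.
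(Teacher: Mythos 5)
Your instinct about the shape of the argument is right: one wants a unit vector $u$ orthogonal to $n-1$ rows of $A$, with $a_k^Tu\ge\delta$ by the $\delta$-distance property, $c'^Tu\ge\mu_k\,a_k^Tu$, and $c^Tu\le 0$; then $\|c-c'\|\ge (c'-c)^Tu\ge\mu_k\delta$. The gap is exactly the sign condition $c^Tu\le 0$ that you flag. If you take $u$ orthogonal to $\{a_j:j\in B'\setminus\{k\}\}$ and oriented so that $a_k^Tu>0$, there is no reason for the normal cone $C$ of the $c$-optimal basis to lie in the half-space $\{u^Tx\le 0\}$. Concretely, in the plane take $a_1=(1,0)$, $a_2=(\tfrac1{\sqrt2},\tfrac1{\sqrt2})$, $a_3=(0,1)$, $a_4=(-\tfrac1{\sqrt2},\tfrac1{\sqrt2})$, with $B=\{1,2\}$, $B'=\{3,4\}$, $k=3$. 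Then $u$ is perpendicular to $a_4$ and points toward $a_3$, so $u=(\tfrac1{\sqrt2},\tfrac1{\sqrt2})$, and $a_1^Tu=\tfrac1{\sqrt2}>0$: the cone $C$ lies strictly on the \emph{same} side as $a_k$, and the inequality $c^Tu\le 0$ fails. So the hyperplane spanned by $B'\setminus\{k\}$ is not the right one in general; it is only guaranteed to support $C'$, not to separate $\operatorname{int}(C)$ from $\operatorname{int}(C')$.

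The paper fixes this by \emph{constructing} a hyperplane $H=(h^Tx=0)$ with three simultaneous properties: (i) $H$ separates $\operatorname{int}(C)$ from $\operatorname{int}(C')$, (ii) $a_k\notin H$, (iii) $H$ is spanned by $n-1$ rows of $A$. One starts from any strict separating hyperplane (which exists because the two normal cones have disjoint interiors) and tilts it — rotating $h$ along an equator that fixes the rows already contained in $H$ — until a new row of $A$ enters $H$; if that new row happens to be $a_k$, one reverses the rotation direction. Repeating increases the number of spanning rows until $n-1$ are attained while preserving (i) and (ii). Property (i) is what makes the sign bookkeeping go through: all of $C'=\cone\{a_j:j\in B'\}$ lies in one closed half-space of $H$, so $\sum_{j\in B'\setminus\{k\}}\mu_ja_j$ is on the same side as $\mu_ka_k$ and cannot cancel its contribution, giving $d(c',H)\ge\mu_k\,d(a_k,H)\ge\mu_k\delta$; and $c\in C$ lies in the other closed half-space, so $\|c-c'\|\ge d(c',H)$. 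Your construction achieves (ii) and (iii) but not (i), and (i) is the load-bearing condition.

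Your fallback via $(c-c')^T(v^*-v)\ge 0$ is also not a proof as written. After the refinement $(c-c')^T(v^*-v)\ge\mu_k\bigl(b_k-a_k^Tv^*\bigr)$, closing the argument would require something like $b_k-a_k^Tv^*\ge\delta\,\|v-v^*\|$, i.e.\ a lower bound on the angle the segment $v^*v$ makes with $a_k$. That is a statement about the primal polytope $K$, not about the normal fan, and it does not follow from the $\delta$-distance property of the rows; $v$ and $v^*$ can be far apart and the segment between them nearly parallel to the facet $a_k^Tx=b_k$. So that route, as sketched, has no visible way to produce the factor $\delta$.
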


Assuming~\eqref{eq:4} and following the notation of the lemma, let $B'$ be the optimal basis of
the linear program with objective function $c'x$. Since
$\|c\|=1$ we have $\|c'\| > 1 - \delta / (2\cdot n)$. Because the rows of $A$ have norm one, this implies   that
there exists a $k \in B'$ with $\mu_k > 1/n \cdot \left(1 - \delta /
  (2\cdot n) \right)$. Each such  $k $ must be in $B$ since $\delta \cdot
\mu_k > \delta/n \left(1 - \delta / (2\cdot n) \right) \geq
\delta/(2\cdot n)$.


\medskip 

Once we have identified an index $k $ of the optimal basis $B$ of
(\ref{eq:1}) we  set this inequality to equality and let
the simplex algorithm search for the next element of the basis on the
induced face of $K$. This is in fact a $n-1$-dimensional linear
program with the $\delta$-distance property. We explain now why this is the case. 

Suppose that the element from the optimal basis is $a_1$. Let $U \in
\R^{n \times n}$ be a non-singular orthonormal matrix that rotates $a_1$ into
the first unit vector, i.e.
  \begin{displaymath}
    a_1^T \cdot U = e_1^T. 
  \end{displaymath}
  The linear program $\max\{c^Tx \colon x \in \R^n, \, Ax\leq b\}$ is
  equivalent to the linear program $\max\{c^T U\cdot x \colon x \in \R^n, \, A
  \cdot U \cdot x \leq b\}$. Notice that this transformation preserves
  the $\delta$-distance property. Therefore we can assume that $a_1 $ is the first unit vector.  

  Now we can set the constraint $x_1 \leq b_1$ to equality and
  subtract this equation from the other constraints such that they do
  not involve the first variable anymore. The $a_i$ are in this way
  projected into the orthogonal complement of $e_1$. We scale the
  vectors and right-hand-sides with a scalar $\geq 1$ such that they
  lie on the unit sphere and are now left with a linear program with
  $n-1$ variables that satisfies  the $\delta$-distance property as we show now.  \footnote{A similar fact holds for totally unimodular constraint matrices, see, e.g.,\cite[Proposition 2.1, page~540]{NemhauserWolsey89} meaning that after one has identified an element of the optimal basis, one is left with a linear program in dimension $n-1$ with a totally unimodular constraint matrix. A similar fact fails to hold for integral matrices with sub-determinants bounded by $2$.}

\begin{lemma}
  \label{lem:1}
  Suppose that the vectors
  $a_1,...,a_m$ 
  satisfy the $\delta$-distance property, then $a_2^*,...,a_m^*$, after being  scaled to unit length, 
  satisfy the $\delta$-distance property as well, where $a_i^*$ is the
  projection of $a_i$ onto the orthogonal complement of $a_1$.
\end{lemma}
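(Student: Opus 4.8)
\emph{Proof plan.} Following the reduction described just above the statement, I assume $a_1 = e_1$, write $H = a_1^\perp$ for its orthogonal complement, and let $\pi \colon \R^n \to H$ be the orthogonal projection, so $\pi$ has kernel $\langle a_1 \rangle$ and $a_i^* = \pi(a_i) = a_i - (a_i^T a_1)a_1$ for $i \geq 2$. Since $\|a_i\| = 1$ we have $\|a_i^*\| \leq 1$; a row with $a_i^* = 0$ is parallel to $a_1$ and its projected constraint is trivial, so such rows are discarded and we may assume $a_i^* \neq 0$ and put $\hat a_i = a_i^*/\|a_i^*\|$. Fix $I \subseteq \{2,\dots,m\}$ and $j \in \{2,\dots,m\}$ with $\hat a_j \notin \langle \hat a_i \colon i \in I\rangle$. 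Rescaling does not change linear spans, so writing $L^* \coloneqq \langle a_i^* \colon i \in I \rangle$ we have $\langle \hat a_i \colon i \in I\rangle = L^*$ and $a_j^* \notin L^*$; moreover, since distance to a \emph{linear} subspace is homogeneous, $d(\hat a_j, \langle \hat a_i \colon i \in I \rangle) = d(a_j^*, L^*)/\|a_j^*\| \geq d(a_j^*, L^*)$, using $\|a_j^*\| \leq 1$. Hence it suffices to show $d(a_j^*, L^*) \geq \delta$.

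The idea is to transfer this back to the original vectors by adjoining $a_1$. Put $L \coloneqq \langle a_i \colon i \in I \cup \{1\} \rangle$. Using $a_i = a_i^* + (a_i^T a_1)a_1$ for $i \in I$, one checks $L = L^* + \langle a_1\rangle$, and since $L^* \subseteq H \perp a_1$ this is an orthogonal direct sum, $L = L^* \oplus \langle a_1\rangle$. Splitting every vector into its $H$-part and its $a_1$-part then gives two facts. First, $a_j \in L$ if and only if $\pi(a_j) = a_j^* \in L^*$; in our situation $a_j^* \notin L^*$, so $a_j \notin L$. Second, for any $y$, writing $y = \pi(y) + (y^T a_1)a_1$ and minimizing $\|y - \ell\|^2 = \|\pi(y) - \ell^*\|^2 + (y^Ta_1 - \mu)^2$ over $\ell = \ell^* + \mu a_1 \in L$ (match the $a_1$-coordinate exactly, then best-approximate $\pi(y)$ inside $L^*$) yields $d(y, L) = d(\pi(y), L^*)$; with $y = a_j$ this is $d(a_j, L) = d(a_j^*, L^*)$.

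Now I invoke the $\delta$-distance property of the original vectors $a_1,\dots,a_m$ with the index set $I \cup \{1\} \subseteq [m]$ and the index $j$: since $a_j \notin L = \langle a_i \colon i \in I \cup \{1\}\rangle$, the property gives $d(a_j, L) \geq \delta$. Chaining the identities from the previous two paragraphs,
$d(\hat a_j, \langle \hat a_i \colon i \in I\rangle) \geq d(a_j^*, L^*) = d(a_j, L) \geq \delta$,
which is exactly the $\delta$-distance property for the normalized projected vectors.

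I do not expect a genuine obstacle. The only points requiring care are the bookkeeping in the orthogonal splitting of the middle paragraph — in particular that $L$ really contains $\langle a_1\rangle$, which is what makes the distance computation decouple into the $H$-component — and the remark that the final renormalization by $1/\|a_j^*\| \geq 1$ can only help: rescaling a vector does change its distance to a subspace in general, but here it changes it in the favorable direction.
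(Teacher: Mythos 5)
Your proof is correct and follows essentially the same route as the paper: pass to the enlarged index set $I \cup \{1\}$, observe that $a_j \notin \langle a_i : i \in I \cup \{1\}\rangle$, and transfer the distance bound via the orthogonal splitting along $a_1$. You are somewhat more explicit than the paper — in particular you verify $a_j \notin L$ and spell out the renormalization step $\|a_j^*\| \leq 1$, both of which the paper leaves implicit — but the underlying idea is identical.
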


\begin{proof}
  Let $I \subseteq \left\{2,\dots,m\right\}$ and $j \in \left\{ 2,
    \dots, m\right\}$ such that $a_j^*$ is not in the span of the
  $a_i^*, \, i \in I$. Let $d(a^*_j, \left< a^*_i \colon i \in I
  \right>) = \gamma$. Clearly, $ d\left(a_j^*, \left< a_i \colon i \in I
    \cup \{1\} \right> \right) \leq \gamma$ and since $a_j^*$ stems from $a_j$
  by subtracting a suitable scalar multiple of $a_1$, we have $ d(a_j,
  \left< a_i \colon i \in I \cup \{1\} \right> \leq \gamma$ and
  consequently $\gamma \geq \delta$.
\end{proof}

\medskip   
\noindent 
This shows that we can solve the linear programming problem efficiently  if
 we there is an efficient algorithm that solves the following  problem.

  
\bigskip 
\begin{adjustbox}{minipage=\linewidth,fbox}
  \begin{problem}\label{prob:1}
    Given an initial vertex of the linear
    program~\eqref{eq:1}, compute a vector $c' \in \R^n$
    together with a vertex $v$ of~\eqref{eq:1} such that
    \begin{enumerate}[a)]
    \item The vector $c ' $ \label{item:3}
      is contained in  the normal cone $C_v$ of $v$, and  \label{item:1} 
    \item $\|c - c'\| < \delta/(2 \cdot n)$ hold. \label{item:2} 
    \end{enumerate}         
  \end{problem}
\end{adjustbox}  
\medskip  

\noindent 
Notice that condition \ref{item:3}) is equivalent to $v$ being an optimal solution of the linear program~\eqref{eq:1} wherein $c$ has been replaced by $c'$. 


\section{A random walk  controls the simplex  pivot operations} 
\label{sec:random-edge-pivoting}

Next we show how to solve Problem~\ref{prob:1} with the simplex algorithm wherein   the pivoting is carried out by a random walk. Consider the function 
\begin{equation}
  \label{eq:6} 
  f(x) = \exp( - \|x - \alpha \cdot c  \|_1 ),  
\end{equation}
where $\alpha \geq 1$ is a constant whose value will be determined later. 
Imagine that  we could sample a point $x^*$  from the distribution with density proportional to \eqref{eq:6} and that, at the same time, we are provided with a vertex $v$ that is an optimal solution of the linear program~\eqref{eq:1} where $c$ has been replaced by $c' = x^* / \alpha$. We are interested in the  probability that  $c' = x^* / \alpha$ together with $v$ is not a solution of Problem~\ref{prob:1}. This happens if $\|c' - c \|_2 \geq \delta / 2n$ and since $\|\cdot \|_1 \geq \|\cdot \|_2$ one has 
\begin{equation}
  \label{eq:7}
  \|x^* -\alpha \cdot c \|_1 \geq \alpha \cdot \delta / 2n  
\end{equation}
in this case. The probability of the event~\eqref{eq:7} is equal to the probability that a random $y^*$ chosen with a density proportional to $\exp(-\|y\|_1)$ has $\ell_1$ norm at least $\alpha \cdot \delta / 2n$. 
Since at least  one component of $y$ needs to have absolute value at least $\alpha \cdot \delta / 2n^2$ to satisfy~\eqref{eq:7}, this probability is upper bounded by 
\begin{equation}
  \label{eq:11}
n   \int_{\alpha \cdot \delta / 2n^2}^\infty  \exp(-x)\, dx  =   n / \exp(\alpha \cdot \delta / 2n^2) 
\end{equation}
by applying  the union bound. Thus if $\alpha \geq 2n^3 / \delta$, this probability is exponentially small in $n$. 

Approximate sampling for log-concave distributions can be dealt with by random-walk techniques~\cite{ApplegateK91}. 
As in the paper of Dyer and Frieze~\cite{MR1274170} we combine these techniques with the simplex algorithm to keep track of the optimal vertex of the current point of the random walk.

 Remember that the normal cones $C_v$ of the vertices $v$ partition the space $\R^n$. Each of these cones is now again partitioned into 
countably infinitely many parallelepipeds
whose axes are parallel to the unit vectors defining the cone of length $1/n^2$
see Figure~\ref{fig:6}.  
More precisely, a given cone $C_v = \{ \sum_{i \in B} \lambda_i a_i
\colon\lambda_i \geq 0, \, i \in B\}$ is partitioned into translates
of the parallelepiped
\begin{displaymath}  
  \left\{ \sum_{i \in B} \lambda_i \cdot a_i \colon 0 \le
    \lambda_i \le 1/n^2, \, i \in B \right\}.
\end{displaymath}
The \emph{volume} of such a
parallelepiped $P$ is $\vol(P) =(1/n^2)^n |\det(A_B)|$ where $A_B$ is the sub-matrix of $A$ consisting of the rows indexed by the basis $B$. 
For a parallelepiped $P$, we define
\[
f(P) = f(z_P)\vol(P) 
\]
where $z_P$ is the center of $P$. 
\begin{figure}[h]
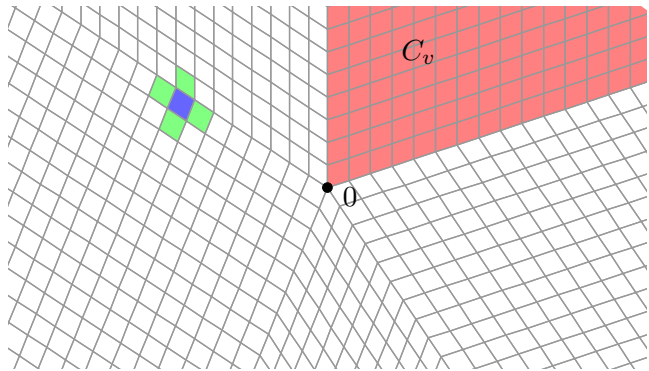

  \centering  
  \iftoggle{DEBUG}{
    }
    {
   \begin{tikzpicture}[scale=0.6]
   \coordinate []  (F) at (0,0);

   \clip  (-7,-4) rectangle  (7,4);

    \fill[red,opacity=.5] (0,0) -- (0,9) -- ( 8.52 , 2.82 );

    \input{One_Parallelepiped_and_Neighbors} 
     
     \input{parallelepipeds_picture_isoperimetric}

\node[align=left] at (2,3) {$C_v$};

    \filldraw (F) [ black]  circle (3pt);
    \node[align=right] at (0.5,-0.2) {$0$}; 
       
 \end{tikzpicture}
}
  \caption{The partitioning of $\R^2$ into parallelepipeds for the polytope in Figure~\ref{fig:5} together with the illustration of a parallelepiped (blue) and its neighbors (green).}
  \label{fig:6}
\end{figure}
The state space of the random walk will be all parallelepipeds used to
partition all cones, a countably infinite collection. One \emph{iteration of the random walk} is now as follows. 
\begin{quote}
\fbox{\parbox{.8\textwidth}{
  At the parallelepiped $P$, pick one of the $2n$
  parallelepipeds that share a facet with $P$
  uniformly at random, i.e. with probability $(1/2)\, n$.
  Let this parallelepiped be $P'$
  and go to it with probability $\frac{1}{2}\min\{1, f(P')/f(P)\}$.
} 
}
\end{quote}
\noindent 
Notice that $P'$ can be a parallelepiped that is contained in a different cone. If we then make the transition to $P'$ we perform a \emph{simplex pivot}. In this way we can always  keep track of the normal cone containing the 
current parallelepiped. 
We assume that
we are given some vertex of the polytope, $x^0$ and its associated
basis $B_0 \subseteq\{1,\dots,m\}$ to start with. We explain in the appendix how this assumption can be removed. We start the random walk at the parallelepiped of the cone $C_{x_0}$ that has vertex $0$. The algorithm is completely described below. 

\bigskip 
\begin{adjustbox}{minipage=.97\linewidth,fbox}
\begin{algorithm}[Geometric random edge]\label{alg:1}
~\\ ~ \medskip 

\begin{tabular}{l p{0.7\textwidth}}   
{\bf Input}: & An LP specified by $A,b,c$; a basic feasible solution $x^0$ of $Ax \le b$  and an associated basis $B_0$.  \\

{\bf Output}: & If successful, an element of the optimal basis  
\end{tabular}

\begin{enumerate}
\item Let $f(P)=\exp{(-\|z_P- \alpha c\|_1)} \,\vol(P)$. Start with the parallelepiped $P_0$ in $C_{x_0}$ containing the point $0$. 
\item Repeat for $\ell$ iterations:
\begin{itemize}
\item If $c$ is in the current cone, the algorithm stops. It has found the optimal basis. 
\item Otherwise pick a neighboring parallelepiped, say $P'$, uniformly at random.
\item Go to $P'$ with probability $\frac{1}{2}\min \left\{ 1, \frac{f(P')}{f(P)}\right\}$.\\ 
(this is a pivot whenever $P'$ and $P$ are in different cones.)
\end{itemize}
\item If the center $z_P$ of the final parallelepiped satisfies $\|z_P / \alpha - c\| < \delta/2n$, then { return} an element of the optimal basis as described in Section~\ref{sec:getting-close-c}. Return {\bf \tt failure} otherwise. 
\end{enumerate}

\end{algorithm}
\end{adjustbox}

\bigskip 

\noindent 
Before we analyze the convergence of the random walk we first 
explain why the parallelepipeds in the partition of a cone  are generated by the basis vectors scaled down to length $1/n^2$. 
In short, this is because the  value of the function  $f(x) = \exp(-\|x -\alpha \cdot c\|_1)$ does not vary too much for points in neighboring parallelepipeds. More precisely, we can state the following lemma, where $e = \exp(1)$ denotes the \emph{Euler constant}. 

\begin{lemma}
  \label{lem:2}
  Let $P_1$ and $P_2$ be two parallelepipeds that share a common facet and let $x_1, x_2 \in P_1 \cup P_2$. If $n \geq 4$  then 
  \begin{displaymath}
    f(x_1) / f(x_2) \leq  e. 
  \end{displaymath}
\end{lemma}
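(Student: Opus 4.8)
The plan is to bound the $\ell_1$-distance between the two points $x_1,x_2$, and then use the fact that $f(x)=\exp(-\|x-\alpha c\|_1)$ is $1$-Lipschitz in the sense that $f(x_1)/f(x_2)=\exp\bigl(\|x_2-\alpha c\|_1-\|x_1-\alpha c\|_1\bigr)\le \exp(\|x_1-x_2\|_1)$ by the reverse triangle inequality. So it suffices to show $\|x_1-x_2\|_1\le 1$ whenever $x_1,x_2$ lie in the union of two facet-sharing parallelepipeds. First I would reduce to a diameter estimate: since $P_1$ and $P_2$ are translates of the base parallelepiped $Q=\{\sum_{i\in B}\lambda_i a_i : 0\le\lambda_i\le 1/n^2\}$ (possibly for two different bases $B$, if the shared facet is on a cone boundary), any two points in $P_1\cup P_2$ differ by a vector of the form $\sum_{i} \mu_i a_i$ where the $\mu_i$ range over at most $2n$ of the generating directions, each $|\mu_i|\le 1/n^2$. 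Hence $\|x_1-x_2\|_2\le \sum |\mu_i|\,\|a_i\|_2 \le 2n\cdot(1/n^2) = 2/n$ since each $\|a_i\|_2=1$.

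Next I would convert the $\ell_2$-bound to an $\ell_1$-bound: for vectors in $\R^n$ one has $\|v\|_1\le\sqrt{n}\,\|v\|_2$, so $\|x_1-x_2\|_1\le \sqrt n\cdot(2/n)=2/\sqrt n$. For $n\ge 4$ this gives $\|x_1-x_2\|_1\le 1$, and therefore $f(x_1)/f(x_2)\le \exp(1)=e$, as claimed. I should be a little careful about the case where the common facet separates two different cones $C_u$ and $C_v$: then $P_1$ is a translate of the $B_u$-parallelepiped and $P_2$ of the $B_v$-parallelepiped, but they share a facet, which is a translate of a common $(n-1)$-dimensional face; writing $x_1-x_2 = (x_1-z)+(z-x_2)$ for a point $z$ on the shared facet, each of the two summands lies in the difference set of a single parallelepiped, so $\|x_1-z\|_2\le n/n^2 = 1/n$ and likewise $\|z-x_2\|_2\le 1/n$, giving $\|x_1-x_2\|_2\le 2/n$ again. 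So the same estimate goes through uniformly.

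The main obstacle — really the only subtlety — is making the geometric step precise: that a point in a parallelepiped generated by unit vectors with coefficients in $[0,1/n^2]$ is within $\ell_2$-distance $n/n^2=1/n$ of any other point in the same parallelepiped, and within $2/n$ across a shared facet, regardless of how skewed the generators $a_i$ are (their pairwise angles are uncontrolled — only their lengths are normalized). This is where I would be most careful to write the triangle-inequality bound $\|\sum_i \mu_i a_i\|_2\le\sum_i|\mu_i|$ rather than anything depending on orthogonality; the $\delta$-distance property is not needed here at all. Everything else is the elementary chain $\ell_1\le\sqrt n\,\ell_2$, the reverse triangle inequality for $\|\cdot-\alpha c\|_1$, and plugging in $n\ge4$ to get $2/\sqrt n\le1$.
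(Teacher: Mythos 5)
Your argument is correct and follows the paper's proof exactly: bound the Euclidean distance across the two facet-sharing parallelepipeds by $2/n$, convert to the $\ell_1$-norm via $\|v\|_1\le\sqrt n\,\|v\|_2$ to get $2/\sqrt n\le 1$ for $n\ge 4$, and apply the reverse triangle inequality to the exponential. The only difference is that you spell out the justification for the diameter estimate (which the paper asserts without proof), but the chain of inequalities is identical.
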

\begin{proof}
  The Euclidean distance between any two points in one of the two parallelepipeds is at most $1/n$ and thus $\|x_1 - x_2\| \leq 2/n$. This implies $\|x_1-x_2\|_1 \leq 2/\sqrt{n}\leq 1$ since $n \geq 4$. Consequently 
  \begin{eqnarray*}
    f(x_1) / f(x_2) & =  &\exp(- \|x_1 - \alpha\cdot c\|_1 + \|x_2 - \alpha\cdot c\|_1 ) \\
    & \leq & \exp( \|x_1 - x_2\|_1 ) \\
    & \leq & \exp(1). 
  \end{eqnarray*}
\end{proof}

\section{Analysis of the walk} 
\label{sec:analysis-walk}
 
We assume that the reader is familiar with basics of \emph{Markov chains} (see e.g., \cite{Lovasz1993}). The random walk above with its transition probabilities is in fact  a Markov chain $\mathcal{M} = (\mathscr{P}, p)$ with a countably infinite \emph{state space} $\mathscr{P}$  which is the set of parallelepipeds  and \emph{transition probabilities}  $p\colon \mathscr{P} \times \mathscr{P} \longrightarrow \R_{\geq 0}$. 
%
Let $Q\colon \mathscr{P} \longrightarrow \R_{\geq 0}$ be a probability distribution on $\mathscr{P}$. The distribution $Q$ is called \emph{stationary} if  for each $P \in \mathscr{P}$ one has 
\begin{displaymath}
  Q(P) = \sum_{P' \in \mathscr{P}} p(P',P) \cdot Q(P'). 
\end{displaymath}
Our  Markov chain is  {\em lazy} since, at each step, with probability $\geq 1/2$  
it does nothing.  The \emph{rejection sampling} step where we step from $P$ to $P'$ with probability $1/2 \cdot \min\{1, f(P')/f(P)\}$ is called a \emph{Metropolis filter}. Laziness and the Metropolis filter ensure that $\mathcal{M}= (\mathscr{P},p)$ has  the unique stationary distribution  (see e.g., Section 1, Thm 1.4 of \cite{lovasz1993random} or Thm 2.1 of \cite{VemSurvey}):
\begin{displaymath}
  Q(P) = f(P) / \sum_{P' \in \mathscr{P}} f(P') 
\end{displaymath}
which  is a discretization  of the continuous distribution with density $2^{-n} exp(\|x - \alpha c \|_1)$  from~\eqref{eq:6}.

\medskip 

Performing $\ell$ steps of the walk induces a distribution $Q^{\ell}$ on $\mathscr{P}$ where $Q^\ell(P)$ is the probability that the walk is in the parallelepiped $P$ in the end. In the limit, when $\ell$ tends to infinity,   $Q^\ell$ converges to $Q$. 
We now show that, due to the $\delta$-distance property of the matrix $A$, the walk quickly converges to $Q$. More precisely,  we only need to run it for a polynomial number (in $n$ and $1/\delta$) iterations. Then  $Q^\ell$  will be sufficiently close to $Q$ which shows that Algorithm~\ref{alg:1} solves  Problem~\ref{prob:1} with high probability.

To prove convergence of the walk, we bound the
conductance of the underlying Markov chain \cite{SJ89}. 
The conductance of $\mathcal{M}$ is 
{\markit
  \begin{equation}
    \label{eq:8}
    \phi = \min_{ \substack{S \subseteq \mathscr{P}  \colon\\ 0 <Q(S)\leq 1/2}}\frac{\displaystyle \sum_{P \in S, P' \in \mathscr{P}\setminus S} Q(P)  \cdot p(P,P')}{ Q(S)}. 
  \end{equation}
}%
Jerrum and Sinclair~\cite{SJ89}   related the conductance to the convergence of a finite
Markov chain to its stationary distribution.  Lov\'asz and
Simonovits~\cite{lovasz1993random} extended their result to general Markov
chains and in particular to Markov chains with a countably infinite set of states like our chain  $\mathcal{M} = (\mathscr{P},p)$. We state their theorem in our setting. 
\begin{theorem}[{\cite[Corollary~1.5]{lovasz1993random}}]\label{thm:LS}
  Let  $Q^{\ell}$ be the distribution obtained after $\ell$ steps of the Markov chain  $\mathcal{M} = (\mathscr{P},p)$ started at the initial parallelepiped $P_0$. Then for any $T \subseteq \mathscr{P}$ one has 
  \begin{displaymath}
    |Q^{\ell}(T)-Q(T)| \le Q(P_0)^{-1/2}  \left(1-\frac{\phi^2}{2}\right)^{\ell}.
  \end{displaymath}
\end{theorem}
\noindent
The rate of convergence is thus $O(1/\phi^2)$. Our goal is now to bound $\phi$ from below. 

\subsection*{Bounding the conductance}

Inspecting~\eqref{eq:8}, we first note that the transition probability $p(P,P')$ is zero, unless the parallelepiped $P'$ is a neighbor of $P$, i.e., shares a facet with $P$.  In this case one has 
\begin{displaymath}
  p(P,P') = \frac{1}{4n} \min\{1, f(P') / f(P)\}. 
\end{displaymath}
The  ratio $f(P') / f(P)$ can be bounded from below by  $\delta / e$. 
This is because $f(z_{P'})/f(z_P) \geq 1/e$ by Lemma~\ref{lem:2} and since  $\vol(P') / \vol(P) \geq \delta$. The latter inequality is a consequence of the $\delta$-distance property, as the ratio $\vol(P') / \vol(P)$ is equal to the ratio of the  height of $P'$ and the height of $P$  measured from the common facet that they share respectively. This ratio is at east $\delta$. Consequently we have for neighboring parallelepipeds $P$ and $P'$, 
\begin{displaymath}
   p(P,P') \geq \delta / (4 e n).
\end{displaymath}
 Clearly $Q(P) \cdot p(P,P') = Q(P') \cdot p(P',P)$
which implies that the conductance can be bounded by 
\begin{equation}
  \label{eq:5}
  \phi \geq  \min_{ \substack{S \subseteq \mathscr{P}  \colon\\ 0 <Q(S)\leq 1/2}}({\displaystyle  \delta   }/{4en})  \, \frac{Q(N(S))}{Q(S)}. 
\end{equation}
where $N(S) \subseteq \mathscr{P} \setminus S$ denotes the \emph{neighborhood} of $S$, which is the set of parallelepipeds $P' \notin S$ for which there exists a neighboring $P \in S$. 

\medskip 

We next  make use of the following isoperimetric inequality that was shown by Bobkov and Houdré~\cite{bobkov1997isoperimetric}  for more general product  probability measures. For our function $f(x) = \exp(-\|x - \alpha\cdot c\|_1)$  it reads as follows, where $\partial(A)$ denotes the boundary of $A$. 

\begin{theorem}[\cite{bobkov1997isoperimetric}]\label{thm:iso}
Let $f(x) = \exp(-\|x - \alpha \cdot c\|_1)$. For any  measurable set $A \subset \R^n$ with $0<2^{-n} \int_A f(x) \, dx <1/2$ one has 
\[
\int_{\partial A} f(x)\, dx \ge  \frac{1}{2 \sqrt{6}} \int_A f(x)\, dx. 
\]
\end{theorem}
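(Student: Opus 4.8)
The inequality is, up to normalization, a translate of the product isoperimetric inequality of Bobkov and Houdr\'e~\cite{bobkov1997isoperimetric} for the symmetric exponential measure, so the plan is to reduce it to their setting and then recall why their bound holds.

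\emph{Reduction.} First I would strip off the shift. Putting $g(x)=\exp(-\|x\|_1)$ and substituting $y=x-\alpha\cdot c$ turns $\int_A f$ into $\int_{A-\alpha c} g$ and $\int_{\partial A} f\,d\mathcal{H}^{n-1}$ into $\int_{\partial(A-\alpha c)} g\,d\mathcal{H}^{n-1}$, because translations preserve Lebesgue measure and the $(n-1)$-dimensional Hausdorff measure and send $\partial A$ to $\partial(A-\alpha c)$; hence one may assume $\alpha\cdot c=0$. Since $\int_{\R} e^{-|t|}\,dt=2$, the function $2^{-n}g(x)=\prod_{i=1}^n\frac12 e^{-|x_i|}$ is the density of the product measure $\mu_n=\nu^{\otimes n}$, where $\nu$ is the two-sided exponential law on $\R$. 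Writing $\mu_n^+(A)=\liminf_{h\downarrow0} h^{-1}\mu_n(A_h\setminus A)$ for the Minkowski surface measure in the Euclidean metric, I would use that $\mu_n^+(A)=2^{-n}\int_{\partial A} g\,d\mathcal{H}^{n-1}$ whenever $\partial A$ is rectifiable — in particular for the finite unions of parallelepipeds that occur in the analysis of the walk. After multiplying by $2^n$ and using $\min(\mu_n(A),1-\mu_n(A))=\mu_n(A)$ under the hypothesis $2^{-n}\int_A g<\tfrac12$, the assertion becomes exactly
\[
\mu_n^+(A)\ \ge\ \tfrac{1}{2\sqrt6}\,\min\bigl(\mu_n(A),1-\mu_n(A)\bigr),
\]
which is the Bobkov--Houdr\'e isoperimetric bound for $\mu_n$.

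\emph{One-dimensional base case.} Next I would establish the profile of $\nu$ explicitly: for a half-line $(-\infty,a]$ of $\nu$-measure $t\le\tfrac12$ one has $a=\log(2t)$, so the density at $a$ equals $\tfrac12 e^{a}=t$ and $\nu^+=t$; and half-lines minimize the surface measure among all sets of given measure by the one-dimensional rearrangement principle for log-concave densities on $\R$. Thus $I_\nu(t)=\min(t,1-t)$, equivalently $\nu$ obeys a Cheeger-type $L^1$-Poincar\'e inequality $\int|u-m_\nu(u)|\,d\nu\le C_1\int|u'|\,d\nu$ with a universal constant and $m_\nu$ a median.

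\emph{Tensorization and conclusion.} The hard part is to show that this one-dimensional inequality passes to the $n$-fold Euclidean product with its constant degraded only by a universal factor: variance tensorizes painlessly, but medians do not, so one must argue by induction on dimension (or by a ``block'' decomposition), replacing the median by a suitable cut level, controlling $\|\nabla u\|_2\le\sum_i|\partial_i u|$ coordinatewise and absorbing the cross terms — and it is precisely this bookkeeping that produces the constant $\tfrac{1}{2\sqrt6}$. Finally, applying the resulting functional inequality to smooth approximations $u_k\to\mathbf 1_A$ with $\int|\nabla u_k|\,d\mu_n\to\mu_n^+(A)$ would give the displayed bound, and undoing the reduction would give $\int_{\partial A} f\ge\tfrac1{2\sqrt6}\int_A f$. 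In the final write-up this product-space argument can be quoted verbatim from~\cite{bobkov1997isoperimetric}; only the translation-invariance reduction is particular to our $f$.
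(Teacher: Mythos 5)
Your proposal is correct and follows essentially the same route as the paper: Theorem~\ref{thm:iso} is not proved in the paper at all but is imported directly from Bobkov and Houdr\'e~\cite{bobkov1997isoperimetric}, with the translation by $\alpha\cdot c$ and the $2^{-n}$ normalization being the only adaptation, exactly as in your reduction. Your additional sketch of the one-dimensional profile and the product tensorization is a fair summary of the cited source, so nothing further is needed beyond the citation.
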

Theorem~\ref{thm:iso} together with the $\delta$-distance property  yields a lower bound on ${Q(N(S))}/{Q(S)}$ as follows.  Each parallelepiped $P' \in N(S)$ has at least  facet  $F$  at the boundary $\partial(\cup_{P \in S} P)$, see Figure~\ref{fig:8}.
Lemma~\ref{lem:2} implies that $\int_F f(x) \, dx \leq e \cdot  \Area(F) \cdot  f(z_{P'})$ and since the height of $P'$ w.r.t. $F$ is at least $\delta$ one has $ \Area(F) \cdot \delta \leq \vol(P')  $ which implies 
\begin{displaymath}
  \int_F f(x) \, dx  \leq e/\delta \cdot f(z_{P'}) \vol(P')  = e / \delta \cdot   f(P'). 
\end{displaymath}
Since each $P' \in N(S)$ has $2n$ facets, this implies 
\begin{equation}
  \label{eq:9}
2n\,e / \delta  \cdot   \sum_{P' \in N(S)} f(P') \geq \frac{1}{2 \sqrt{6}} \displaystyle \int_{\cup_{P \in S} P} f(x)\, dx \geq \frac{1}{2\, e\, \sqrt{6}} \sum_{P \in S} f(P)
\end{equation}
where we used Lemma~\ref{lem:2} again in the last inequality. Thus 
$Q(N(S))/Q(S) \geq \delta / (4 e^2 \sqrt{6} n)$ which implies a bound of $\Omega(\delta^2/n^2)$  on the conductance. 

\begin{figure}[h]
  \centering  
  \iftoggle{DEBUG}{}
  {
 \begin{tikzpicture}[scale=0.5]
   \coordinate []  (F) at (0,0);
      
   \clip  (-7,-5.5) rectangle  (8,5.5);     
   
   { 

   }

   \begin{scope}
     \input{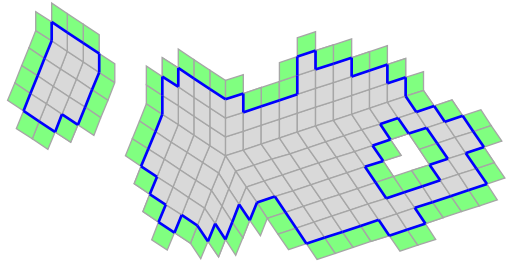}
   \end{scope}         
   
   

    \begin{scope}
       \draw[  blue, line width = 1pt ] ( 4.26 , 1.92 )
-- ( 4.74 , 2.08 )
;  \draw[  blue, line width = 1pt ] ( 4.74 , 2.08 )
-- ( 4.74 , 1.58 )
;  \draw[  blue, line width = 1pt ] ( 0.47 , 1.15 )
-- ( 0.94 , 1.31 )
;  \draw[  blue, line width = 1pt ] ( 0.94 , 1.31 )
-- ( 1.42 , 1.47 )
;  \draw[  blue, line width = 1pt ] ( 1.42 , 1.47 )
-- ( 1.89 , 1.63 )
;  \draw[  blue, line width = 1pt ] ( 3.32 , 2.10 )
-- ( 3.79 , 2.26 )
;  \draw[  blue, line width = 1pt ] ( 3.79 , 2.26 )
-- ( 4.26 , 2.42 )
;  \draw[  blue, line width = 1pt ] ( 4.26 , 2.42 )
-- ( 4.26 , 1.92 )
;  \draw[  blue, line width = 1pt ] ( 0 , 1.50 )
-- ( 0.47 , 1.65 )
;  \draw[  blue, line width = 1pt ] ( 0.47 , 1.65 )
-- ( 0.47 , 1.15 )
;  \draw[  blue, line width = 1pt ] ( 1.89 , 1.63 )
-- ( 1.89 , 2.13 )
( 1.89 , 1.63 )
;  \draw[  blue, line width = 1pt ] ( 2.37 , 2.29 )
-- ( 2.84 , 2.44 )
;  \draw[  blue, line width = 1pt ] ( 2.84 , 2.44 )
-- ( 3.32 , 2.60 )
;  \draw[  blue, line width = 1pt ] ( 3.32 , 2.60 )
-- ( 3.32 , 2.10 )
;  \draw[  blue, line width = 1pt ] ( 1.89 , 2.63 )
-- ( 2.37 , 2.79 )
;  \draw[  blue, line width = 1pt ] ( 2.37 , 2.79 )
-- ( 2.37 , 2.29 )
;  \draw[  blue, line width = 1pt ] ( 1.89 , 2.13 )
-- ( 1.89 , 2.63 )
( 1.89 , 2.13 )
;  \draw[  blue, line width = 1pt ] ( 1.30 , -1.08 )
-- ( 0.83 , -1.24 )
;  \draw[  blue, line width = 1pt ] ( 1.30 , -1.08 )
-- ( 1.58 , -1.50 )
( 1.30 , -1.08 )
;  \draw[  blue, line width = 1pt ] ( 1.58 , -1.50 )
-- ( 1.86 , -1.92 )
( 1.58 , -1.50 )
;  \draw[  blue, line width = 1pt ] ( 2.61 , -2.17 )
-- ( 2.13 , -2.33 )
;  \draw[  blue, line width = 1pt ] ( 1.86 , -1.92 )
-- ( 2.13 , -2.33 )
( 1.86 , -1.92 )
;  \draw[  blue, line width = 1pt ] ( 3.08 , -2.02 )
-- ( 2.61 , -2.17 )
;  \draw[  blue, line width = 1pt ] ( 3.56 , -1.86 )
-- ( 3.08 , -2.02 )
;  \draw[  blue, line width = 1pt ] ( 4.03 , -1.70 )
-- ( 3.56 , -1.86 )
;  \draw[  blue, line width = 1pt ] ( 3.67 , -0.29 )
-- ( 3.95 , -0.71 )
;  \draw[  blue, line width = 1pt ] ( 3.95 , -0.71 )
-- ( 4.23 , -1.13 )
;  \draw[  blue, line width = 1pt ] ( 4.78 , -1.96 )
-- ( 4.31 , -2.12 )
;  \draw[  blue, line width = 1pt ] ( 4.03 , -1.70 )
-- ( 4.31 , -2.12 )
( 4.03 , -1.70 )
;  \draw[  blue, line width = 1pt ] ( 3.87 , 0.27 )
-- ( 4.15 , -0.14 )
;  \draw[  blue, line width = 1pt ] ( 4.15 , -0.14 )
-- ( 3.67 , -0.29 )
;  \draw[  blue, line width = 1pt ] ( 4.23 , -1.13 )
-- ( 4.70 , -0.97 )
( 4.23 , -1.13 )
;  \draw[  blue, line width = 1pt ] ( 4.98 , -1.38 )
-- ( 5.26 , -1.80 )
;  \draw[  blue, line width = 1pt ] ( 5.26 , -1.80 )
-- ( 4.78 , -1.96 )
;  \draw[  blue, line width = 1pt ] ( 4.07 , 0.84 )
-- ( 4.34 , 0.43 )
;  \draw[  blue, line width = 1pt ] ( 4.34 , 0.43 )
-- ( 3.87 , 0.27 )
;  \draw[  blue, line width = 1pt ] ( 5.45 , -1.23 )
-- ( 4.98 , -1.38 )
;  \draw[  blue, line width = 1pt ] ( 4.70 , -0.97 )
-- ( 5.18 , -0.81 )
( 4.70 , -0.97 )
;  \draw[  blue, line width = 1pt ] ( 4.54 , 1.00 )
-- ( 4.07 , 0.84 )
;  \draw[  blue, line width = 1pt ] ( 5.93 , -1.07 )
-- ( 5.45 , -1.23 )
;  \draw[  blue, line width = 1pt ] ( 5.18 , -0.81 )
-- ( 5.65 , -0.65 )
( 5.18 , -0.81 )
;  \draw[  blue, line width = 1pt ] ( 4.74 , 1.58 )
-- ( 5.02 , 1.16 )
;  \draw[  blue, line width = 1pt ] ( 5.29 , 0.74 )
-- ( 4.82 , 0.59 )
;  \draw[  blue, line width = 1pt ] ( 4.54 , 1.00 )
-- ( 4.82 , 0.59 )
( 4.54 , 1.00 )
;  \draw[  blue, line width = 1pt ] ( 5.37 , -0.24 )
-- ( 5.65 , -0.65 )
( 5.37 , -0.24 )
;  \draw[  blue, line width = 1pt ] ( 5.37 , -0.24 )
-- ( 5.85 , -0.08 )
( 5.37 , -0.24 )
;  \draw[  blue, line width = 1pt ] ( 6.40 , -0.91 )
-- ( 5.93 , -1.07 )
;  \draw[  blue, line width = 1pt ] ( 5.49 , 1.32 )
-- ( 5.77 , 0.90 )
;  \draw[  blue, line width = 1pt ] ( 5.02 , 1.16 )
-- ( 5.49 , 1.32 )
( 5.02 , 1.16 )
;  \draw[  blue, line width = 1pt ] ( 5.29 , 0.74 )
-- ( 5.57 , 0.33 )
( 5.29 , 0.74 )
;  \draw[  blue, line width = 1pt ] ( 5.57 , 0.33 )
-- ( 5.85 , -0.08 )
( 5.57 , 0.33 )
;  \draw[  blue, line width = 1pt ] ( 6.32 , 0.07 )
-- ( 6.60 , -0.34 )
;  \draw[  blue, line width = 1pt ] ( 6.60 , -0.34 )
-- ( 6.88 , -0.75 )
;  \draw[  blue, line width = 1pt ] ( 6.88 , -0.75 )
-- ( 6.40 , -0.91 )
;  \draw[  blue, line width = 1pt ] ( 6.24 , 1.06 )
-- ( 6.52 , 0.64 )
;  \draw[  blue, line width = 1pt ] ( 5.77 , 0.90 )
-- ( 6.24 , 1.06 )
( 5.77 , 0.90 )
;  \draw[  blue, line width = 1pt ] ( 6.52 , 0.64 )
-- ( 6.80 , 0.23 )
;  \draw[  blue, line width = 1pt ] ( 6.80 , 0.23 )
-- ( 6.32 , 0.07 )
;  \draw[  blue, line width = 1pt ] ( -0.27 , -1.80 )
-- ( -0.46 , -2.27 )
;  \draw[  blue, line width = 1pt ] ( -0.46 , -2.27 )
-- ( -0.74 , -1.85 )
;  \draw[  blue, line width = 1pt ] ( 0.36 , -1.29 )
-- ( 0.18 , -1.76 )
;  \draw[  blue, line width = 1pt ] ( 0.18 , -1.76 )
-- ( -0.00 , -2.22 )
;  \draw[  blue, line width = 1pt ] ( -0.00 , -2.22 )
-- ( -0.27 , -1.80 )
;  \draw[  blue, line width = 1pt ] ( 0.83 , -1.24 )
-- ( 0.64 , -1.71 )
;  \draw[  blue, line width = 1pt ] ( 0.64 , -1.71 )
-- ( 0.36 , -1.29 )
;  \draw[  blue, line width = 1pt ] ( -1.84 , 0.64 )
-- ( -1.66 , 1.10 )
;  \draw[  blue, line width = 1pt ] ( -3.32 , 2.21 )
-- ( -3.51 , 1.75 )
( -3.32 , 2.21 )
;  \draw[  blue, line width = 1pt ] ( -4.76 , 2.58 )
-- ( -4.57 , 3.05 )
;  \draw[  blue, line width = 1pt ] ( -2.03 , 0.18 )
-- ( -1.84 , 0.64 )
;  \draw[  blue, line width = 1pt ] ( -3.51 , 1.75 )
-- ( -3.69 , 1.29 )
( -3.51 , 1.75 )
;  \draw[  blue, line width = 1pt ] ( -4.94 , 2.12 )
-- ( -4.76 , 2.58 )
;  \draw[  blue, line width = 1pt ] ( -1.80 , -0.56 )
-- ( -2.22 , -0.28 )
;  \draw[  blue, line width = 1pt ] ( -2.22 , -0.28 )
-- ( -2.03 , 0.18 )
;  \draw[  blue, line width = 1pt ] ( -3.88 , 0.82 )
-- ( -4.30 , 1.10 )
;  \draw[  blue, line width = 1pt ] ( -3.69 , 1.29 )
-- ( -3.88 , 0.82 )
( -3.69 , 1.29 )
;  \draw[  blue, line width = 1pt ] ( -5.13 , 1.65 )
-- ( -4.94 , 2.12 )
;  \draw[  blue, line width = 1pt ] ( -0.74 , -1.85 )
-- ( -1.15 , -1.57 )
;  \draw[  blue, line width = 1pt ] ( -1.57 , -1.30 )
-- ( -1.99 , -1.02 )
;  \draw[  blue, line width = 1pt ] ( -1.99 , -1.02 )
-- ( -1.80 , -0.56 )
;  \draw[  blue, line width = 1pt ] ( -4.48 , 0.63 )
-- ( -4.90 , 0.91 )
;  \draw[  blue, line width = 1pt ] ( -4.30 , 1.10 )
-- ( -4.48 , 0.63 )
( -4.30 , 1.10 )
;  \draw[  blue, line width = 1pt ] ( -4.90 , 0.91 )
-- ( -5.31 , 1.19 )
;  \draw[  blue, line width = 1pt ] ( -5.31 , 1.19 )
-- ( -5.13 , 1.65 )
;  \draw[  blue, line width = 1pt ] ( -1.34 , -2.04 )
-- ( -1.76 , -1.76 )
;  \draw[  blue, line width = 1pt ] ( -1.76 , -1.76 )
-- ( -1.57 , -1.30 )
;  \draw[  blue, line width = 1pt ] ( -1.15 , -1.57 )
-- ( -1.34 , -2.04 )
( -1.15 , -1.57 )
;  \draw[  blue, line width = 1pt ] ( -0.41 , 1.77 )
-- ( 0 , 1.50 )
;  \draw[  blue, line width = 1pt ] ( -0.83 , 2.05 )
-- ( -0.41 , 1.77 )
;  \draw[  blue, line width = 1pt ] ( -1.24 , 1.83 )
-- ( -1.24 , 2.33 )
;  \draw[  blue, line width = 1pt ] ( -1.24 , 2.33 )
-- ( -0.83 , 2.05 )
;  \draw[  blue, line width = 1pt ] ( -1.66 , 1.10 )
-- ( -1.66 , 1.60 )
;  \draw[  blue, line width = 1pt ] ( -1.66 , 1.60 )
-- ( -1.66 , 2.10 )
;  \draw[  blue, line width = 1pt ] ( -1.66 , 2.10 )
-- ( -1.24 , 1.83 )
;  \draw[  blue, line width = 1pt ] ( -3.74 , 2.99 )
-- ( -3.32 , 2.71 )
;  \draw[  blue, line width = 1pt ] ( -3.32 , 2.21 )
-- ( -3.32 , 2.71 )
( -3.32 , 2.21 )
;  \draw[  blue, line width = 1pt ] ( -4.16 , 3.27 )
-- ( -3.74 , 2.99 )
;  \draw[  blue, line width = 1pt ] ( -4.57 , 3.05 )
-- ( -4.57 , 3.55 )
;  \draw[  blue, line width = 1pt ] ( -4.57 , 3.55 )
-- ( -4.16 , 3.27 )
;
    \end{scope}

 \end{tikzpicture}
}

 \caption{An illustration of Theorem~\ref{thm:iso} in our setting. The gray parallelepipeds are the set $S$ and the boundary $\partial S$ is in blue. The green parallelepipeds are $N(S)$.}
  \label{fig:8}
\end{figure}

{\markit

\begin{lemma}\label{lem:conductance}
The conductance of the random walk on the parallelepipeds  is $\Omega(\delta^2 / n^2)$.
\end{lemma}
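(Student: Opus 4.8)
The plan is to assemble the estimates developed in the paragraphs immediately preceding the lemma. I would start from the definition~\eqref{eq:8} of $\phi$ and note that $p(P,P')=0$ unless $P'$ is a neighbor of $P$, in which case $p(P,P')=\frac{1}{4n}\min\{1,f(P')/f(P)\}$. The first task is to bound this transition probability from below uniformly over neighboring pairs: Lemma~\ref{lem:2} gives $f(z_{P'})/f(z_P)\ge 1/e$, while the $\delta$-distance property forces the volume ratio $\vol(P')/\vol(P)$ --- which equals the ratio of the heights of $P'$ and $P$ over their shared facet --- to be at least $\delta$; hence $f(P')/f(P)\ge\delta/e$ and $p(P,P')\ge\delta/(4en)$ for all neighboring $P,P'$.

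Next I would use reversibility. Because $Q(P)p(P,P')=f(P)p(P,P')/\sum_{P''}f(P'')$ is symmetric in its two arguments, the numerator of~\eqref{eq:8} equals $\sum_{P'\in N(S),\,P\in S}Q(P')p(P',P)\ge\frac{\delta}{4en}\,Q(N(S))$, so that $\phi\ge\frac{\delta}{4en}\cdot\min_S Q(N(S))/Q(S)$, which is exactly~\eqref{eq:5}. It then remains to prove the isoperimetric-type bound $Q(N(S))/Q(S)=\Omega(\delta/n)$.

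For that I would transfer the continuous isoperimetric inequality of Theorem~\ref{thm:iso} to the discrete chain, applied to the region $A=\bigcup_{P\in S}P$, which satisfies $2^{-n}\int_A f=Q(S)\le 1/2$ by the normalization of $Q$. On each parallelepiped $f$ stays within a factor $e$ of its center value by Lemma~\ref{lem:2}, so $\int_A f\ge\frac1e\sum_{P\in S}f(P)$, and Theorem~\ref{thm:iso} gives $\int_{\partial A}f\ge\frac{1}{2e\sqrt6}\sum_{P\in S}f(P)$. For the boundary integral, $\partial A$ decomposes into facets $F$, each a facet of a unique $P'\in N(S)$; using Lemma~\ref{lem:2} once more together with the height bound $\Area(F)\cdot\delta\le\vol(P')$ coming from the $\delta$-distance property, $\int_F f\le\frac e\delta f(P')$, and since $P'$ has only $2n$ facets, $\int_{\partial A}f\le\frac{2ne}{\delta}\sum_{P'\in N(S)}f(P')$. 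Comparing the two bounds and dividing by the normalizing constant of $Q$ yields $Q(N(S))/Q(S)\ge\delta/(4e^2\sqrt6\,n)$; plugging this into~\eqref{eq:5} gives $\phi\ge\delta^2/(16e^3\sqrt6\,n^2)=\Omega(\delta^2/n^2)$, as claimed.

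The main obstacle I anticipate is the discrete-to-continuous passage in the last paragraph: one must verify carefully that every facet of $A=\bigcup_{P\in S}P$ lies in the boundary of exactly one parallelepiped of $N(S)$, so that the facet sum defining $\int_{\partial A}f$ is not overcounted, and that the height-versus-volume relation $\Area(F)\,\delta\le\vol(P')$ really follows from the $\delta$-distance property applied to the basis defining the cone that contains $P'$. Everything else is bookkeeping with Lemma~\ref{lem:2}, reversibility of the Metropolis chain, and the normalization of $Q$.
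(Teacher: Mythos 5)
Your argument reproduces the paper's proof essentially step by step: the transition-probability lower bound $p(P,P')\ge\delta/(4en)$ via Lemma~\ref{lem:2} and the volume-ratio estimate, the reduction via reversibility of the Metropolis chain to bounding $Q(N(S))/Q(S)$, and the application of Theorem~\ref{thm:iso} together with the facet-area-to-volume relation, ending in the same constants. The concern you raise at the end about the bound $\Area(F)\cdot\delta\le\vol(P')$ is in fact warranted: since the parallelepipeds have edge length $1/n^2$ and the rows $a_i$ are unit vectors, the $\delta$-distance property gives a height of at least $\delta/n^2$ rather than $\delta$; the paper asserts the stronger bound, and propagating the weaker one through the argument would degrade the conductance estimate to $\Omega(\delta^2/n^4)$, which still yields a polynomial pivot bound but with a worse exponent than stated, so this is a point worth settling carefully.
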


\subsection*{Bounding the failure probability of the algorithm}

Algorithm~\ref{alg:1} does not return an element of the optimal basis if 
\begin{displaymath}
  \|z_P - \alpha \cdot c\|_2 \geq \delta / 2n.
\end{displaymath}
What is the probability  of this event if $P$ is sampled according to $Q$, the stationary distribution of the walk? If this happens, then the final parallelepiped  is contained in the set
\begin{equation}
  \label{eq:10}
  \{x \in \R^n \colon  \|x - \alpha \cdot c\|_2 \geq \alpha \delta / 2n - 1/n\}
\end{equation}
since the diameter of a parallelepiped is at most $1/n$. Let $T \subseteq \mathscr{P}$ be the set of parallelepipeds that are contained in the set~\eqref{eq:10}. To estimate $Q(T)$ we use the notation $\wt{T}$ to denote the set of points in the union of the parallelepipeds, i.e., $\wt{T} = \cup_{P \in T}P$. Using Lemma~\ref{lem:2}, we see that 
\begin{displaymath}
  Q(T) = \sum_{P \in T} f(P) /  \sum_{P \in \mathscr {P}} f(P) \leq e^2 \cdot    2^{-n} \cdot \int_{\wt{T}}f(x) \, dx. 
\end{displaymath}
Clearly  $2^{-n} \cdot \int_{\wt{T}}f(x) \, dx$ is at most the probability that a random point  $x^* \in \R^n$ sampled from the distribution with density  $2^{-n}\cdot \exp(-\|x \|_1)$ has $\ell_2$-norm at least $ \alpha \cdot \delta / 2n - 1/n \geq \alpha \cdot  \delta / 4n$ where we assume that $\alpha \geq 2 n^3 / \delta$  in the last inequality. Arguing as in~\eqref{eq:11} we conclude that 
\begin{displaymath}
  Q(T) \leq e^2 \cdot n / \exp( \alpha \cdot \delta / 4n^2). 
\end{displaymath}
Thus is $\alpha = 4n^3 / \delta$, then 
\begin{equation}
  \label{eq:14}
  Q(T) \leq e^2\cdot  n / \exp(n) 
\end{equation}
which is exponentially small in $n$. 

\medskip

How many steps $\ell$ of the walk are necessary until 
\begin{equation}
  \label{eq:15}
  |Q^\ell(T) - Q(T)| \leq \exp(-n) 
\end{equation}
holds? Remember that the conductance $\phi \geq \xi \cdot  (\delta/n)^2$ for some constant $\xi >0$. Thus~\eqref{eq:15} holds if 
\begin{equation}
  \label{eq:16}
  \sqrt{e^2 2^{n} / f(P_0)} \left(1 - \xi\cdot  (\delta/n)^2\right)^\ell < \exp(-n). 
\end{equation}
Now $f(P_0) = f(z_{P_0}) \vol(P_0) \geq \exp (-\|2 \alpha c \|_1) n^{-n} \delta^n \geq \exp(-8 \cdot n^{3.5}/\delta ) (\delta/n)^n$. Thus~\eqref{eq:16} holds if
\begin{displaymath}
  \left(1 - \xi\cdot  (\delta/n)^2\right)^\ell  < \exp(-  \nu  n^{3.5}/\delta) (\delta / n)^{n},
\end{displaymath}
where $\nu>0$ is an appropriate constant. Using the inequality $(1+x) \leq exp(x)$ and taking logarithms on both sides, this holds if 
\begin{equation}
  \label{eq:18}
  - \ell \cdot  \xi\cdot  (\delta/n)^2 < -\nu  n^{3.5}/\delta + n \ln \delta / n
\end{equation}
and thus $\ell = \Omega(n^{5.5} / \delta^3)$ is a polynomial lower bound on $\ell$ such that \eqref{eq:15} holds. With the union bound we thus have our main result. 

\begin{theorem}
  \label{thr:1}
  If Algorithm~\ref{alg:1} performs $\Omega(n^{5.5} / \delta^3)$
  steps of the random walk, then the probability of failure is bounded
  by $e^2 \cdot n \exp(-n) + \exp(-n)$.
  Consequently, there exists a randomized simplex-type algorithm for
  linear programming with an expected  running time that is polynomial in
  $n/\delta$.
\end{theorem}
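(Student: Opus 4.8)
The plan is to assemble the pieces already established in the excerpt. Fix $\alpha = 4n^3/\delta$ and let $T\subseteq\mathscr{P}$ be the set of parallelepipeds contained in the slab~\eqref{eq:10}. Equation~\eqref{eq:14} already gives $Q(T)\le e^2 n\exp(-n)$ for the stationary distribution $Q$. Lemma~\ref{lem:conductance} gives $\phi\ge\xi(\delta/n)^2$ for an absolute constant $\xi>0$, and Theorem~\ref{thm:LS} bounds $|Q^\ell(T)-Q(T)|\le Q(P_0)^{-1/2}(1-\phi^2/2)^\ell$. To convert this into a step count I would lower-bound $Q(P_0)\ge f(P_0)$ and estimate $f(P_0)=f(z_{P_0})\vol(P_0)\ge\exp(-\|2\alpha c\|_1)(\delta/n)^n\ge\exp(-8 n^{3.5}/\delta)(\delta/n)^n$, using $\|c\|=1$, $\|\cdot\|_1\le\sqrt n\,\|\cdot\|_2$, and the fact that the starting parallelepiped has volume $(1/n^2)^n|\det(A_{B_0})|\ge(\delta/n)^n$ since each of its edges has height at least $\delta$. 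Substituting into Theorem~\ref{thm:LS}, inequality~\eqref{eq:15} holds as soon as~\eqref{eq:16} does, and after taking logarithms this reduces to the linear inequality~\eqref{eq:18}; solving for $\ell$ yields $\ell=\Omega(n^{5.5}/\delta^3)$.

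With these two estimates in hand, the union bound gives that after $\ell=\Omega(n^{5.5}/\delta^3)$ steps the probability that the final parallelepiped lies in $T$ — equivalently, that step~3 of Algorithm~\ref{alg:1} reports failure — is at most $Q(T)+|Q^\ell(T)-Q(T)|\le e^2 n\exp(-n)+\exp(-n)$. When it does not fail, the center $z_P$ satisfies $\|z_P/\alpha-c\|<\delta/2n$, so the discussion around~\eqref{eq:4} together with Lemma~\ref{lem:4} guarantees that the returned index genuinely belongs to the optimal basis; hence Algorithm~\ref{alg:1} solves Problem~\ref{prob:1} with all but exponentially small probability.

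To obtain the claimed randomized simplex-type algorithm, I would run Algorithm~\ref{alg:1} and, on the exponentially rare event of failure, simply restart; the expected number of restarts is $1+o(1)$. Once one index of the optimal basis is identified, fix that inequality to equality: by Lemma~\ref{lem:1} the induced face is an $(n-1)$-dimensional linear program with the same $\delta$-distance parameter, and by the reduction of Section~\ref{sec:getting-close-c} the whole optimal basis is recovered after $n$ such rounds. (The reduction to a bounded polytope and the construction of an initial vertex, deferred to the appendix, only perturb the data by polynomial-size quantities and leave $\delta$ unchanged.) Each round performs $O(n^{5.5}/\delta^3)$ walk steps, and each step costs $\poly(m,n)$ arithmetic operations: evaluating $f$ on a neighboring parallelepiped and, on a pivot, updating the basis and the current normal cone. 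Multiplying, the total expected running time is $\poly(n,m,1/\delta)$, polynomial in the dimension, the number of inequalities, and $1/\delta$.

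The main obstacle in making this rigorous is not a single deep argument but the bookkeeping around the warm-start factor $Q(P_0)^{-1/2}$: one must verify that the crude lower bound on $f(P_0)$, which loses an $\exp(\Theta(n^{3.5}/\delta))$ factor through the $\|2\alpha c\|_1$ term, is still dominated — after taking logarithms — by the $\phi^{-2}=O(n^4/\delta^4)$ mixing rate, so that a genuinely polynomial $\ell$ suffices. On the algorithmic side, the remaining care is in checking that the current normal cone (and hence the simplex basis) can be tracked in polynomial time per step as the walk crosses cone boundaries, since each such crossing is exactly one simplex pivot.
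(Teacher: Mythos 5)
Your proof reproduces the paper's argument essentially verbatim: the same choice $\alpha=4n^3/\delta$, the same bound~\eqref{eq:14} on $Q(T)$, the same use of Lemma~\ref{lem:conductance} and Theorem~\ref{thm:LS} with the warm-start estimate on $f(P_0)$ leading to inequalities~\eqref{eq:16}--\eqref{eq:18}, followed by the union bound; your spelling out of the ``consequently'' clause (restart on failure, $n$ rounds of dimension reduction via Lemma~\ref{lem:1}, pivot tracking across cone boundaries) is a reasonable elaboration of what the paper leaves implicit. One slip: the inequality $Q(P_0)\ge f(P_0)$ is in the wrong direction, since $Q(P_0)=f(P_0)/\sum_{P'}f(P')$ and $\sum_{P'}f(P')$ is on the order of $2^n$ (so actually $Q(P_0)\ge f(P_0)/(e^2 2^n)$, which is what appears inside the square root in~\eqref{eq:16}); because you then fall back on~\eqref{eq:16} anyway, and the missing $2^{n/2}$ contributes only an $O(n)$ term after taking logarithms, the final $\ell=\Omega(n^{5.5}/\delta^3)$ is unaffected, but the intermediate claim should be corrected.
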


\subsection*{Remarks}

\begin{remark}
  The $\delta$-distance
  property is a global property of the matrix $A$.
  However, we used it only for the \emph{feasible bases} of the linear
  program when we showed rapid mixing of the random walk. In the first
  submission of our paper, we asked whether an analog of
  Lemma~\ref{lem:4} also holds for linear programs where a
  \emph{local} $\delta$-distance
  property holds. This was answered positively by Dadush and
  H\"ahnle~\cite{DBLP:conf/compgeom/DadushH15}. Thus our random walk
  can be used to solve linear programs whose basis matrices $A_B$,
  for each \emph{feasible basis} $B$,
  satisfy the $\delta$-distance
  property in expected polynomial time in $n/ \delta$.
  The paper of Dadush and H\"ahnle shows this as well via a variant of
  the \emph{shadow-vertex} pivoting rule. Their algorithm is much
  faster than ours.  Also Brunsch et. al~\cite{brunsch2015solving}
  could recently show that the shadow-vertex pivoting rule results in
  a polynomial-time algorithm for linear programs with the
  $\delta$-distance
  property. Their number of pivot-operations however also depends on
  the number $m$ of constraints.
\end{remark}

\begin{remark}
  Our random walk, and especially the partitioning scheme, is similar to Dyer and Frieze~\cite{MR1274170} and is directly inspired by their paper. The main  differences are the following. 
Dyer and Frieze rely on a \emph{finite} partitioning of a bounded subset of $\R^n$, adapting  a technique of Applegate and Kannan~\cite{ApplegateK91}. Our walk, however, is on a countably infinite set of states. For this, we rely on the paper of Lov\`asz and Simonovits~\cite{lovasz1993random} which extends the relation of the conductance and rapid mixing  to more general Markov chains, in particular countably infinite Markow chains. We also choose the $\ell_1$-norm, i.e., the function $\exp(-\|x - \alpha \cdot c\|_1)$,  whereas Dyer and Frieze used the Euclidean norm.  With this choice, the isoperimetric inequality of Bobkov and Houdr\'e~\cite{bobkov1997isoperimetric} makes the conductance analysis much simpler. Finally, we analyze the walk in terms of the $\delta$-distance in a satisfactory way. We feel that this geometric property is better suited for linear programming algorithms, as it is more general. 
\end{remark}

\subsubsection*{Acknowledgments} 

We are grateful to Daniel Dadush and Nicolai H\"ahnle, who
pointed out an error in the sub-division scheme in a previous version
of this paper. We also thank the anonymous referees for their careful remarks and suggestions that helped us to improve the presentation of our result. 

{ \small 

}

\section{Appendix}


\begin{proof}[Proof of Lemma~\ref{lem:4}]
  We denote the normal cones of $B$ and $B'$ by 
  \begin{displaymath}
    {C} = \{ \sum_{i\in B} \lambda_i a_i\colon \lambda_i\geq 0 \}  \quad \text{and} \quad  {C}' = \{\sum_{j \in B'} \mu_j a_j \colon \mu_j \geq 0\}. 
  \end{displaymath}
  By a gift-wrapping technique, we construct a hyperplane $(h^T x =
  0)$, $h \in \R^n \setminus \{0\}$  such that the following conditions hold. 
   \begin{enumerate}[i)]
  \item The hyperplane separates the interiors of $C$ and $C'$.\label{item:4}
  \item The row $a_k$ does not lie on the hyperplane. \label{item:6}
  \item The hyperplane is spanned by $n-1$ rows of A. \label{item:7}
  \end{enumerate}

Once we construct $(h^Tx = 0)$ we can argue as follows. 
The distance of $\mu_k \cdot a_k$ to the hyperplane $(h^Tx = 0)$ is at
least $\mu_k\cdot \delta$.  Since $c'$ is the sum of $\mu_k \cdot a_k$ and a
vector that is on the same side of the hyperplane as $a_k$ it
follows that the distance of $c'$ to this hyperplane is also at least
$\mu_k \cdot \delta$. Since $c$ lies on the opposite side of the
hyperplane, the distance of $c$ and $c'$ is at least $\mu_k \cdot
\delta$.

  We start with a hyperplane $(h^Tx = 0)$ strictly separating the
  interiors of $C$ and $C'$.  The conditions
  \ref{item:4},~\ref{item:6}) are satisfied. Suppose that \ref{item:7}) is not satisfied and let $\ell < n-1$ be the maximum number of linearly independent rows of $A$ that are contained in $(h^Tx = 0)$. 
  
  We tilt the hyperplane by moving its normal vector $h$ along a
  chosen equator of the ball of radius $\|h\|$ to augment this number.
  Since $\ell< n-1$ there exists  an equator leaving the rows
  of $A$ that are  in contained in $(h^T = 0)$ invariant under each rotation. 

  However, as soon as the hyperplane contains a new row of $A$ we
  stop.  If this new row of $A$ is not $a_k$ then still, conditions
  \ref{item:4},~\ref{item:6}) hold and the hyperplane now contains
  $\ell+1$ linearly independent rows of $A$. 
  
  If this new
  row is $a_k$, then we redo the tilting operation but this time by
  moving $h$ in the opposite direction on the chosen equator. Since
  there are $n$ linearly independent rows of $A$ without the row $a_k$
  this tilting will stop at a new row of $A$ which is not $a_k$ and we
  end the first tilting operation. 

  This tilting operation has to be repeated at most $n-1 - |B \cap B'|$ times to obtain the desired hyperplane.

\end{proof}

\subsection*{Phase 1}
\label{sec:finding-an-initial}

We now describe an approach to determine an initial basic feasible
solution or to assert that the linear program~(\ref{eq:1}) is
infeasible. Furthermore, we justify the assumption that the set of
feasible solutions is a bounded polytope. This \emph{phase~1} is
different from the usual textbook method since the linear programs
that we need to solve have to comply with the \emph{$\delta$-distance
  property}.

To find an initial basic feasible solution, we start by identifying
$n$ linearly independent linear inequalities $\wt{a}_1^Tx \leq \wt{b}_1,
\dots,\wt{a}_n^Tx \leq \wt{b}_n$ of $Ax \leq b$. Then we determine a ball that
contains all feasible solutions. This standard technique is for
example described in~\cite{GroetschelLovaszSchrijver88}. Using the
normal-vectors $\wt{a}_1,\dots,\wt{a}_n$ we next determine values $\beta_i,
\gamma_i \in \R$, $i=1,\dots,n$ such that this ball is contained in
the parallelepiped $Z = \{ x \in \R^n \colon \beta_i \leq \wt{a}_i^Tx \leq
\gamma_i, \, i=1,\dots,n\}$. We start with a basic feasible solution
$x_0^*$ of this parallelepiped and then proceed in $m$ iterations. In
iteration $i$, we determine a basic feasible solution $x^*_i$ of the
polytope
\begin{displaymath}
P_i =   Z \cap \{x \in \R^n \colon a_j^Tx \leq b_j, \,1 \leq j \leq i  \}
\end{displaymath}
using the basic feasible solution $x^*_{i-1}$ from the previous iteration by solving the linear program 
\begin{displaymath}
  \min\{a_i^Tx \colon x \in P_{i-1}\}. 
\end{displaymath}
If the optimum value of this linear program is larger than $b_i$, we
assert that the linear program~(\ref{eq:1}) is infeasible. Otherwise
$x^*_i$ is the basic feasible solution from this iteration.

Finally, we justify the assumption that $P = \{x \in \R^n \colon Ax
\leq b\}$ is bounded as follows. Instead of solving the linear
program~(\ref{eq:1}), we solve the linear program $\max\{ c^Tx \colon
x \in P \cap Z\}$ with the initial basic feasible solution $x^*_m$. If
the optimum solution is not feasible for (\ref{eq:1}) then we assert
that (\ref{eq:1}) is unbounded.

\end{document}